\documentclass[10pt, a4paper]{IEEEtran}

\bibliographystyle{IEEEtran}

\usepackage[switch,columnwise]{lineno}
\usepackage[english,swedish]{babel}			
\usepackage[noadjust]{cite}				
\usepackage[utf8]{inputenc}				
\usepackage{graphicx}					
\usepackage{listings} 					
\usepackage{lscape} 					
\usepackage{amsmath}					
\usepackage{amssymb}					
\usepackage{mathrsfs}					
\usepackage{amsthm}
\usepackage{url} 						
\usepackage{nameref}
\usepackage{tikz}
\usetikzlibrary{calc}
\usepackage{xcolor}
\usepackage{longtable}					
\usepackage[pdftex,bookmarks=true, colorlinks=true, linkcolor=black, urlcolor=black, citecolor=black, linktoc=section]{hyperref}
\usepackage{tikz}						
\usetikzlibrary{arrows}

\tikzstyle{int}=[draw, fill=white, minimum size=2em]
\tikzstyle{init} = [pin edge={to-,thin,black}]

\theoremstyle{plain}
\newtheorem{definition}{Definition}
\newtheorem{lemma}{Lemma}
\newtheorem{theorem}{Theorem}

\newtheorem{proposition}{Proposition}
\theoremstyle{plain}

\newtheorem{remark}{Remark}

\newtheorem{problem}{Problem}

	


\newcommand{\xh}{\hat{x}}
\newcommand{\xt}{\tilde{x}}

\newcommand{\yt}{\tilde{y}}
\newcommand{\zh}{\hat{z}}
\newcommand{\zt}{\tilde{z}}
\newcommand{\zb}{\bar{z}}
\newcommand{\zc}{\check{z}}
\newcommand{\sh}{\hat{s}}
\newcommand{\st}{\tilde{s}}
\newcommand{\R}{\mathbb{R}}
\newcommand{\C}{\mathbb{C}}
\newcommand{\N}{\mathbb{N}}

\newcommand{\I}{\textup{I}}
\newcommand{\T}{\intercal}
\newcommand{\tr}{\mathbf{Tr}}
\newcommand{\E}[1]{\mathbf{E}\left(#1\right)}

\newcommand{\sB}{\mathbf{B}}
\newcommand{\sS}{\mathbf{S}}
\newcommand{\sH}{\mathbf{H}}     
\newcommand{\gv}{\mathcal{N}}

\newcommand{\lp}{\left(}
\newcommand{\rp}{\right)}

\newcommand{\pr}[1]{\mathbf{Pr}\left\{#1\right\}}
\newcommand{\z}{\mathbf{z}}



\title{Feedback Capacity of Gaussian Channels Revisited} 
\date{}

\author{
	Ather Gattami
	\thanks{
		A. Gattami is with RISE AI, Research Institutes of Sweden, Box 1263, SE-164 29 Kista, Sweden. E-mail: ather.gattami@ri.se}
	\thanks{
		Copyright (c) 2018 IEEE. Personal use of this material is permitted.  However, permission to use this material for any other purposes must be obtained from the IEEE by sending a request to pubs-permissions@ieee.org.
	}
\thanks{Communicated by H. Permuter, Associate Editor for Shannon Theory.}
}

\begin{document}
\selectlanguage{english}
\maketitle

\begin{abstract}
In this paper,  we revisit the problem of finding the average capacity of the Gaussian feedback channel. 
First, we consider the problem of finding the average capacity of the analog Gaussian noise channel where
the noise has an arbitrary spectral density. We introduce a new approach to the problem where we solve
the problem over a finite number of transmissions and then consider the limit of an infinite number of transmissions. Further, we consider the important special case of stationary Gaussian noise with  finite memory. We show that the channel capacity at stationarity can be found by solving a semi-definite program, and hence computationally tractable. We also give new proofs and results of the non stationary solution which bridges the gap between results in the literature for the stationary and non stationary feedback channel capacities. It's shown that a linear communication feedback strategy is optimal. Similar to the
solution of the stationary problem, it's shown that the optimal linear strategy is to transmit a linear combination of the information symbols to be communicated and the innovations for the estimation error of the state of the noise process.
\end{abstract}

\begin{keywords}
	Gaussian channel,
	capacity,
	feedback,
	convex optimization.
\end{keywords}

\section{Introduction}

\subsection{Background and Previous Work}
We revisit the problem of communication over a Gaussian feedback channel with possibly colored noise $z=(z_1, z_2, ...)$, $z_k = 0$ for $k\le 0$
 (see Figure \ref{fig1}). More precisely, let $W\in\{1, 2, 3, ..., 2^{nR}\}$ be the message to be transmitted over the Gaussian communication channel
\[
y_k = x_k + z_k
\]
for the time horizon $k=1, ..., n$, where $x^{n} \triangleq (x_1, x_2, x_3, ..., x_{n})$ are the transmitted code words and  $y^{n} \triangleq (y_1, y_2, y_3, ..., y_{n})$, $y^0 \triangleq 0$, are the channel outputs. Each transmitted symbol $x_k$ is a deterministic function of the message $W$, the past transmitted symbols $x^{k-1}$, and the past channel outputs $y^{k-1}$ which accounts for the channel feedback. For block length $n$, we specify a $(2^{nR}, n)$ feedback code
with the encoding maps
$$
f_k: \{1, 2, ..., 2^{nR}\} \times \mathbb{R}^{k-1}
\rightarrow \mathbb{R},  ~~~ \text{for } k = 1, 2, ..., n
$$
Thus, the code words are $x_k = f_k(W, y^{k-1})$ for some time-varying function $f_k$ to be optimized. 
The transmitted symbols are subject to the average power constraint
\begin{equation}
	\label{strictP}
	\frac{1}{n}\sum_{k=1}^{n} x_k^2 \leq P.
\end{equation}

\tikzstyle{block} = [draw, fill=yellow!20, rectangle, 
minimum height=2.5em, minimum width=5em]
\tikzstyle{sum} = [draw, circle, node distance=1cm]
\tikzstyle{input} = [coordinate]
\tikzstyle{output} = [coordinate]
\tikzstyle{pinstyle} = [pin edge={to-,thin,black}]

\begin{figure}[!ht]
	\begin{tikzpicture}[auto, node distance=2cm,>=latex']
	\node [input, name=input] {};
	\node [block, right of=input, node distance=2.7cm] (code) {Code};
	\node [sum, right of=code, node distance=2.0cm] (sum) {+};
	\node [block, right of=sum] (decode) {Decode};
	\draw [draw,->] (input) -- node {$W$} (code);
	\node [output, right of=decode] (output) {};
	\node [input, above of=sum,node distance=0.8cm] (disturbance) {};
	\draw [->] (code) -- node {$x_k$} (sum);
	\draw ($(disturbance) + (0,2mm)$) node {$z_k$} ;
	\draw [->] (disturbance) -- (sum);
	\draw [->] (sum) -- node [name=y] {$y_k$} (decode);
	\draw [->] (decode) -- node [name=mhat] {$\widehat W$}(output);
	\draw [->] (y.south) -- ++(0,-1) node [below,left,yshift=-3mm] {feedback channel} -|  (code.south);
	\end{tikzpicture}
	\caption{The system studied in the paper. The feedback channel is assumed to be noiseless and the measurement noise $z_k$ is given by some colored Gaussian process with statistics known at the transmitter and receiver. }
	\label{fig1}
\end{figure}
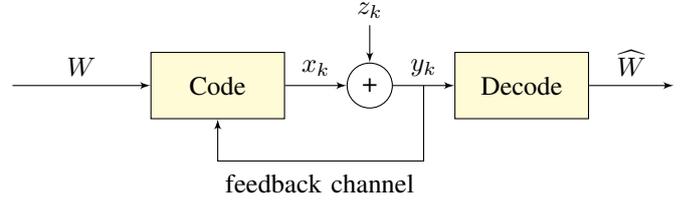

The noise process is a single-input-single-output (SISO) linear dynamical system given by 
$$z_k = \sH(\z) u_k, $$
where 
$$
\sH(\z) = \sum_{l=0}^\infty h_l\z^{-l}
$$
$h_l \in \R$, $\{u_k\}$ are independent and identically distributed (i. i. d.) Gaussian variables with $u_k\sim \gv(0, 1)$, and $\z^{-1}$ is the backward shift operator, that is $\z^{-1}u_k = u_{k-1}$. 

Let $W$ be uniformly distributed over $\{1, 2, 3, ..., 2^{nR}\}$. The decoding map $\widehat{W}_n(y^n)$ is chosen to minimize the average error probability
\begin{align*}
P^{(n)}_e 	&= \frac{1}{2^{nR}}\sum_{i=1}^{2^{nR}}  \pr{\widehat{W}_n(y^n) \ne W | W=i } \\
				&= \pr{\widehat{W}_n(y^n)\ne W}.
\end{align*}
The rate $R$ is achievable if there exists a sequence of $(2^{nR}, n)$ codes with $P^{(n)}_e \rightarrow 0$ as
$n\rightarrow \infty$. The average feedback capacity $C$ is defined as the supremum of all achievable rates.

It's  well known that for the case where the Gaussian noise is white (uncorrelated over time, that is $\sH(\z) = h_0 \in \R$), 
feedback does not improve the capacity of the channel. However, feedback can indeed increase the
capacity when the noise is colored. In the seminal work by Cover and Pombra \cite{cover:1989}, the authors introduced the quantity (where they refer to it as capacity although it does not possess the functional meaning of a
capacity)
$$
C_n = \sup_{f^n} ~ \frac{1}{n} \I(W; y^n )
$$
under the expected value of the average power constraint
\begin{equation}
	\label{expectedP}
	\frac{1}{n}\sum_{k=1}^{n} \E{x_k^2} \leq P
\end{equation}
where $\I(x ; y)$ denotes the mutual information between $x$ and $y$. It was shown that for an arbitrary Gaussian stochastic process $z$, there exists a sequence of $\left(2^{n(C_n-\epsilon)}, n\right)$ feedback codes with $P_e^{(n)}\rightarrow 0$ as $n \rightarrow \infty$ for $\epsilon > 0$. The converse holds also. It was shown that any sequence of $\left(2^{n(C_n+\epsilon)}, n\right)$ codes has $P_e^{(n)}$ bounded away from zero for all $n$. Furthermore, it was
shown that $C_n$ can be found by taking the supremum over $x^n = B_n z^n + v^n$, where $B_n\in \mathbf{T}_n$, $\mathbf{T}_n$ is the space of strictly lower triangular $n\times n$ matrices,  and $v^n$ is a
Gaussian process. That is, for $z^n \sim \gv{(0, Z_n)}$, $Z_n\succ 0$ (where $Z_n\succ 0$ means that $Z_n$ is positive definite), and $v^n  \sim \gv{(0, \mathcal{V}_n)}$, $ \mathcal{V}_n\succeq 0$ (where $\mathcal{V}_n \succeq 0$ means that $\mathcal{V}_n$ is positive semi-definite), we have
\begin{align*}
	& C_n = \\
	& \sup_{\substack{B_n\in \mathbf{T}_n\\ \mathcal{V}_n \succeq 0\\ \tr(B_nZ_nB_n^\T + \mathcal{V}_n) \le nP}} \frac{1}{2n}\log \frac{\det\left(\mathcal{V}_n + (B_n+I)Z_n(B_n+I)^\T \right)}{\det(Z_n)}
\end{align*}
The value of $C_n$ was shown to be feasible to compute using semi-definite programming \cite[Equation (2.16)]{vandenberghe1998determinant}. However, the semi-definite program size grows linearly with $n$, and it's  not possible to use the approach of \cite[Equation (2.16)]{vandenberghe1998determinant} to compute the average feedback capacity, that is $C_n$ as $n$ goes to infinity when the limit exists.
For the stricter case of average power constraints (\ref{strictP}) (and not its expected value as in (\ref{expectedP})), one needs to ensure that for $\epsilon > 0$ there exists an integer $n$ such that
\begin{equation}
\label{pcin}
\pr{\frac{1}{n} \sum_{k=1}^n x_k^2 > P} < \epsilon
\end{equation}
This can be guaranteed, for instance if the noise process $z$ is stationary (that is $R_{kl}\triangleq \E{z_k z_l}=\E{z_{j+k} z_{j+l}}=R_{(j+k)(j+l)}$ for all $j,k,l$). Note that if (\ref{pcin}) is satisfied, then as $n\rightarrow \infty$, (\ref{strictP}) and (\ref{expectedP}) become equivalent. Since we will consider stationary noise process $z$, we will consider the power constraint given by (\ref{expectedP}) (See also section VIII in \cite{cover:1989} for discussion on the stricter power inequality) 

For the case of stationary noise process $z$ with spectral density function $\sS_z(e^{i\theta})$, it was shown in  \cite[Theorem 3.2]{kim:2010}, that the average feedback capacity at stationarity, $C = \lim_{n\rightarrow \infty} C_n$, is given by
\[
C = \sup_{\sB, \sS_v} \int_{-\pi}^{\pi} \frac{1}{2}\log \frac{\sS_v(e^{i\theta}) + |\sB(e^{i\theta})+1|^2 \sS_z(e^{i\theta})}{\sS_z(e^{i\theta})} \frac{d\theta}{2\pi} 
\]
where the supremum is taken over all power spectral densities $\sS_v(\z)$ and strictly causal linear operators 
$$\sB(\z)=\sum_{l=1}^{\infty} b_l\z^{-l}$$ 
satisfying the power constraint
$$
\int_{-\pi}^{\pi} \left(\sS_v(e^{i\theta}) + |\sB(e^{i\theta})|^2 \sS_z(e^{i\theta})\right)\frac{d\theta}{2\pi}  \le P
$$

Also, \cite{kim:2010} considered the important special case of a Gaussian process $z$ of finite order, given by the state space equations
\begin{equation}
\label{linsys0}
\begin{aligned}
s_{k+1} 			&= Fs_k + Gu_k\\
z_k				&= Hs_k + u_k\\
u_k				&\sim \gv(0, 1)\\
\E{u_k u_l} 	&= 0 ~~ \forall k, l ~|~ k\neq l	\\
\end{aligned}
\end{equation}
where $F\in \R^{m\times m}$, $G\in \R^m$, $H\in \R^{1\times m}$, and $u_k$, $s_k$, and $z_k$ take values in $\R$, $\R^m$, and $\R$, respectively.
It was
shown in \cite[Theorem 6.1]{kim:2010} that the average feedback capacity at stationarity is given by
$$
C = \frac{1}{2}\log_2(Y)
$$
and $Y\in \R$ is the solution to the nonconvex optimization problem
\[
\begin{aligned}
 	\max_{\substack{X, Y \\ \Sigma \succeq 0}} &~~ Y\\
	\text{s.t. } ~~ P &\geq X\Sigma X^\T \\
				 ~~ \Sigma &= F\Sigma F^\T + GG^\T - \Gamma Y \Gamma^\T \\
				 ~~ \Gamma &= (F\Sigma (X+H)^\T + G)Y^{-1} \\
				  ~~ Y &= (X+H)\Sigma (X+H)^\T + 1
\end{aligned}
\]
The solution relies on considering the stationary problem directly instead of solving the problem over a finite horizon $n$ and then letting $n\rightarrow \infty$. The stationarity property in turn allows for using problem formulations and mathematical tools in the frequency domain.
However, a solution to the above optimization problem is intractable in practice since the equalities are nonlinear in the optimization variables. Thus, one needs a different approach in order to get a practical solution.

A related problem is that of communciation over a Gaussian channel with inter-symbol interference that was considered in \cite{yang:2005}. The inter-symbol interference was modeled as a finite order filter where the concept of directed information was used to obtain a dynamic programming formulation with constraints. Beyond the first order filter case, there is no known tractable solution to find the average feedback channel capacity numerically.

In this paper, we will use insights from systems theory in general and  linear systems theory in particular to derive new formulaes and results for the feedback capacity of Gaussian channels. Inspired by previous results for the stationary Gaussian noise channels and ideas of the innovation process approach in Kalman filter theory, we show that communicating an affine function of the innovations process of the state of the noise process is optimal.
The intuition for the structure of an optimal communication scheme is that you need to communicate a combination of the information bits and at the same time improve on the estimate of the state of the process noise. The former is important since the information bits carry the message whereas the latter reduces the uncertainty about the channel measurement noise.
Also, for the stationary noise process, we use techniques from the theories of linear systems, Riccati equations, and linear matrix inequalities to transform the optimization problem of the feedback capacity to semi-definite programming.

\subsection{Contributions}
First, we consider the problem of finding the average capacity of the analog Gaussian noise channel where
the noise has an arbitrary spectral density. We introduce a new approach where we solve
the problem over a finite number of transmissions and then consider the limit of the average capacity as the number of
transmissions tend to infinity. 
We show that the maximum average feedback capacity $C$ over an infinite time horizon can be obtained by optimizing over linear strategies $x_k=\sB(\z)z_k+v_k$  with 
$$
\sB(\z) = \sum_{l=1}^\infty b_l \z^{-l}
$$
and $v$ a  white noise process with $v_k\sim \gv(0, V)$ and $V>0$.
The capacity $C$ is the optimal value of
\begin{equation*}
	\begin{aligned}
		\sup_{\substack{V>0\\ \sB(\z)= \sum_{l=1}^\infty b_l \z^{-l}}} & \int_{-\pi}^{\pi} \frac{1}{2} \log 
		\frac{V+ |\sB(e^{i \theta})+1|^2 \sS_z(e^{i \theta}) }{
		\sS_z(e^{i \theta})
		}
		\frac{d\theta}{2\pi}\\
		\textup{s.t.  }~~~~ & V+ \int_{-\pi}^{\pi}  |\sB(e^{i \theta})|^2\sS_z(e^{i \theta})
		\frac{d\theta}{2\pi} \le P
	\end{aligned}
\end{equation*}
This is a simplification compared to the result in \cite[Theorem 3.2]{kim:2010}, where 
the capacity is optimized over white noise processes $v$ instead of all admissible 
processes with spectral densities $\sS_v(e^{i \theta})$. Also, the proofs reveal a special structure
of the optimal feedback strategy. In particular, we show that the structure of an optimal strategy is given by 
$x_k = L_k \zc_k + v_k$, where $L_k$ is some real number, $\zc_k = \zh_k - \zb_k$, $\zh_k = \E{z_k |z^{k-1}}$,  and $\zb_k=\E{\zh_k |y^{k-1}}$. Thus, an optimal communication strategy structure is shown to be linear in the stochastic variable $\zc_k$  which is the difference between
the estimate $\zh_k$ of the noise $z_k$ based on the previous measurements of the noise process 
$z^{k-1}$ and the estimate of $\zh_k$ based on the channel output measurements $y^{k-1}$. This strategy is identical to the one found in \cite{kim:2010} for the \textit{stationary} case, with the simplification that we can restrict $v$ to be a white Gaussian process.

We then consider the important special case of stationary Gaussian noise with finite memory. We show that the average feedback channel capacity at stationarity can be found by solving a semi-definite program, and hence computationally tractable. 
In particular, we show that the channel capacity is given by
$$
C = \frac{1}{2} \log_2(Y)
$$
where $Y\in \R$ is the optimal solution to the convex optimization problem
\begin{equation*}
\begin{aligned}
\sup_{\substack{K \\ \Sigma\succ 0}} &~~   Y \\
	\textup{s.t. }& \\   0 \prec & \begin{pmatrix}
										P		&  K\\
										K^\T	& \Sigma
									\end{pmatrix}\\
					~~	 0 \preceq & 
					\begin{pmatrix}
						F\Sigma F^\T - \Sigma + GG^\T 		& FK^\T + F\Sigma H^\T + G\\
						(FK^\T + F\Sigma H^\T + G)^\T	& Y
					\end{pmatrix} \\
					Y=&  K H^\T + HK^\T + H\Sigma H^\T + P + 1
\end{aligned}
\end{equation*}

\subsection{Paper Outline}
In section \ref{prel}, we introduce the notation used in this paper and give some known results from 
system theory and information theory. In section \ref{finitesect}, we formulate the general problem of
finding the average capacity of the Gaussian channel with feedback where we derive results similar to those
obtained in \cite{cover:1989} and \cite{kim:2010} for the non stationary and stationary Gaussian noise processes,
respectively. We then consider the special case of the stationary Gaussian noise process of finite order in sections \ref{finiteorderfinite} and \ref{finiteorderstationary}.
There, we give new proofs that also show that we can find the average feedback capacity for this case by solving a semi-definite program, and hence
making it computationally tractable. Finally, we provide an example in section \ref{example} with a Matlab code in the appendix that can be used to compare 
the results of this paper with exisiting solutions for the first order Gaussian process case.
Most of the proofs are relegated to the appendix.

\section{Preliminaries}
\label{prel}

\subsection{Notation}

\begin{tabular}{ll}
$\mathbb{N}$ & The set of positive integers.\\
$\mathbb{R}$ & The set of real numbers.\\
$\mathbb{C}$ & The set of complex numbers.\\
$\mathbb{S}^n$ & The set of $n\times n$ symmetric matrices.\\
$\mathbb{S}^n_{+}$& The set of $n\times n$ symmetric positive\\
& semidefinite matrices.\\
$\mathbb{S}^n_{++}$	& The set of $n\times n$ symmetric positive\\
					& definite matrices.\\
$\succeq$ & $A\succeq B$ $\Longleftrightarrow$ $A-B\in \mathbb{S}^n_{+}$.\\
$\succ$ & $A\succ B$ $\Longleftrightarrow$ $A-B\in \mathbb{S}^n_{++}$.\\
$A^{\dagger}$ 	& The Moore-Penrose pseudo-inverse\\ & of the matrix $A$.\\
$s^k$ 	& $s^k=(s_1, s_2, ..., s_k)$ and $s^0\triangleq 0$.\\
$|s|$ 	& For a vector $s=(s_1, s_2, ..., s_k)$, \\
		& $|s|^2 = \sum_{i=1}^k |s_i|^2$.
\end{tabular}

\subsection{System Theory}
The material here can be found in \cite{zhou:1996}.

\begin{definition}[Detectability]
Let $F\in \R^{m\times m}$ and $H\in \R^{p\times m}$.  The pair of matrices $(H, F)$ is detectable 
if 
$$\begin{pmatrix} F-\lambda I \\ H \end{pmatrix}$$ 
has full column rank for all $\lambda\in \C$ such that $|\lambda| \geq 1$. 
\end{definition}


\begin{definition}[Controllability]
Let $F\in \R^{m\times m}$ and $G\in \R^{m\times r}$.  The pair of matrices $(F, G)$ is controllable 
if 
$$\begin{pmatrix} F-\lambda I & G \end{pmatrix}$$ 
has full row rank for all $\lambda\in \C$.
\end{definition}

\begin{definition}[Stability]
Let $F\in \R^{m\times m}$. The matrix $F$ is stable if and only if  
its eigenvalues have modulus strictly less than 1.
\end{definition}


\subsection{Optimal Estimation of Gaussian Processes}
Consider a Gaussian process $z$ given by the state space equations in (\ref{linsys0}). 
Let $\sh_k = \mu_k(z^{k-1})$ be an estimate of $s_k$ based on the measurements $z^{k-1}$ and  let $\st_k = s_k - \sh_k$.
Suppose that we want to minimize the average squared estimation error
$$
\frac{1}{n} \sum_{k=1}^{n} \E{|\st_k|^2}. 
$$
It's  well known (\cite{kailath:linsys}) that the optimal estimator is given by $\sh_k = \E{s_k | z^{k-1}}$ which obeys the Kalman filter recursions
\begin{equation}
\label{kalmanfilter}
	\begin{aligned}
		S_{k} 		&= \E{\st_k\st_k^\T}\\
		S_1		&= 0\\
		S_{k+1}	&= FS_k F^\T + GG^\T \\ 
					&~~~ -  (FS_k H^\T + GE^\T)(HS_k H^\T + 1)^{-1} \\
					& ~~~~~~\times (HS_kF^\T + G^\T)\\
		K_{k}		&= (FS_k H^\T + G)(HS_k H^\T +1)^{-1}\\
		\sh_{k+1}&= F\sh_k +K_k(z_k - H\sh_k)\\
		\st_{k+1} &= (F-K_kH)\st_k + Gu_k - K_k u_k	\\	
		\zt_k 		&= H\st_k + u_k
	\end{aligned}
\end{equation}
A property of the Kalman filter is that the innovations $\zt_k = z_k - H\sh_k = H\st_k + u_k$ are independent for all $k\in \N$.


\subsection{Entropy Properties of Gaussian Variables and Processes}


The differential entropy of the Gaussian process given by (\ref{linsys0}) over a time horizon $k=1, ..., n$ is $h(z^n)$ which may be rewritten as a sum of conditional differential entropies using the differential entropy chain rule  
$$
h(z^n) = \sum_{k=1}^n h(z_k|z^{k-1})
$$
with $z^{k} \triangleq 0$ if $k=0$. The differential entropy rate is given by
$$
h(z) = \lim_{n\rightarrow \infty} \frac{1}{n} h(z^n).
$$

\begin{proposition}
\label{entropy}
	Consider a Gaussian process $z$ with spectral density function $\sS_z(\z)$. Then, the differential entropy rate of $z$ is given by
	$$
	h(z) = \int_{-\pi}^{\pi} \frac{1}{2}\log{\left(\sS_z \left(e^{i \theta}\right)\right)}\frac{d\theta}{2\pi}.
	$$
\end{proposition}
\begin{proof}
Consult \cite{cover:2006}.
\end{proof}
\begin{proposition}
\label{ssRicc}
Consider a Gaussian process given by (\ref{linsys0}) over a finite time horizon $k=1, ..., n$.
The differential entropy of $z^n$ is given by
$$
h(z^n) = \frac{1}{2}\sum_{k=1}^n \log_2 \left( 2\pi e(HS_kH^\T + 1)\right) 
$$
where $S_k$ is given by the recursion
\begin{equation}
\label{riccRecursion}
\begin{aligned}
S_1		&= \E{s_1s_1^\T}\\
		S_{k+1}	&= FS_k F^\T + GG^\T \\ 
					&~~~ -  (FS_k H^\T + G)(HS_k H^\T +1)^{-1} \\
					& ~~~~~~\times (HS_kF^\T + G^\T)\\
\end{aligned}
\end{equation}

Furthermore, if $(H, F)$ is detectable, then the stationary differential entropy rate is given by
$$
h(z) = \lim_{n\rightarrow \infty} \frac{1}{n} h(z^n) = \frac{1}{2}\log_2 \left( 2\pi e(HS H^\T + 1)\right) 
$$
where $S$ is the unique solution to the Riccati equation
\begin{equation}
\label{ssRiccati}
\begin{aligned}
		S	&= FSF^\T + GG^\T \\ 
					&~~~ -  (FS H^\T + G)(HS H^\T + 1)^{-1} \\
					& ~~~~~~\times (HSF^\T + G^\T)\\
\end{aligned}
\end{equation}

\end{proposition} 
\begin{proof}
See Appendix \ref{prp2}.
\end{proof}



\section{Feedback Capacity of Gaussian Channels with Colored Noise of Arbitrary Order}
\label{finitesect}

In this section, we will study the general problem of finding the feedback capacity with respect to Gaussian noise with an arbitrary spectral density function. More precisely, we consider the following problem.

\begin{problem}
\label{problem1}
Let $W$ be the message to be transmitted and consider the Gaussian communication channel
\[
y_k = x_k + z_k
\]
over a time horizon $n$, where $x_k = f_k(W, y^{k-1})$ is the transmitted signal over the channel 
with the average power constraint
\[
\frac{1}{n}\sum_{k=1}^{n} \E{x_k^2} \leq P,
\]
where $y$ is the measurement signal at the receiver, and $z$ is the stationary Gaussian measurement noise process with spectral density function $\sS_z(\z)$. 
Find the value of
$$
C_n = \sup_{f^n} ~ \frac{1}{n} \I(W; y^n ).
$$
where $f^n$ are deterministic functions. In particular, find the average feedback channel capacity
$C = \lim_{n \rightarrow \infty} C_n$.
\end{problem}

It has been shown in \cite{cover:1989} that for a noise sequence $z^n$, 
the mutual information $ \I(W; y^n)$ is given by
$$
\I(W; y^n) = h(y^n) - h(z^n)
$$
To make this paper self contained, we state this result.
\begin{proposition}
\label{coverRes}
Consider the Gaussian feedback channel as described in Problem \ref{problem1}. Then,
$$
\I(W; y^n) = h(y^n) - h(z^n)
$$
\end{proposition}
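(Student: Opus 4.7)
The plan is to start from the definition $\I(W;y^n)=h(y^n)-h(y^n\mid W)$ and show that the conditional term collapses to $h(z^n)$, by exploiting the fact that conditioning on $W$ freezes the encoder and turns the channel into a deterministic transformation of the noise.

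First I would apply the chain rule for conditional differential entropy:
\[
h(y^n\mid W)=\sum_{k=1}^n h(y_k\mid y^{k-1},W).
\]
Fix $k$. Since $x_k=f_k(W,x^{k-1},y^{k-1})$ and $x_1=f_1(W)$, an easy induction on $k$ shows that $x^{k-1}$ is a deterministic function of $(W,y^{k-1})$. Hence, given the conditioning pair $(W,y^{k-1})$, the quantity $x_k$ is a constant and $y_k=x_k+z_k$ is just a deterministic shift of $z_k$. Invariance of differential entropy under translation by a constant (here, a measurable function of the conditioning $\sigma$-algebra) gives
\[
h(y_k\mid y^{k-1},W)=h(z_k\mid y^{k-1},W).
\]

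Next I would argue that the $\sigma$-algebras generated by $(W,y^{k-1})$ and $(W,z^{k-1})$ coincide. In one direction, given $(W,y^{k-1})$ we can reconstruct $x^{k-1}$ as above and then $z^{k-1}=y^{k-1}-x^{k-1}$; in the other direction, given $(W,z^{k-1})$ we can compute $x_1=f_1(W)$, $y_1=x_1+z_1$, and proceed inductively to recover $x^{k-1}$ and $y^{k-1}$. Therefore
\[
h(z_k\mid y^{k-1},W)=h(z_k\mid z^{k-1},W).
\]
Finally, since the noise process $z$ is generated from $\{u_k\}$ which is independent of the uniformly chosen message $W$, we have $z_k\perp W\mid z^{k-1}$, so
\[
h(z_k\mid z^{k-1},W)=h(z_k\mid z^{k-1}).
\]

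Summing over $k$ and using the entropy chain rule once more, $h(y^n\mid W)=\sum_{k=1}^n h(z_k\mid z^{k-1})=h(z^n)$, which yields $\I(W;y^n)=h(y^n)-h(z^n)$. The only delicate point is the invariance-under-translation step, which requires $x_k$ to be measurable with respect to the conditioning variables; I would justify it by explicitly displaying the encoder recursion that writes $x_k$ as a deterministic function of $(W,y^{k-1})$, so that no measurability subtleties remain.
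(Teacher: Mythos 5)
Your proof is correct and follows essentially the same route as the paper's: both expand $\I(W;y^n)=h(y^n)-h(y^n\mid W)$ via the chain rule, use the fact that $x_k$ is a deterministic function of $(W,y^{k-1})$ to shift $y_k$ to $z_k$, exchange the conditioning $(W,y^{k-1})\leftrightarrow(W,z^{k-1})$, and finish with the independence of $z^n$ and $W$. Your phrasing in terms of coinciding $\sigma$-algebras and translation invariance is just a cleaner packaging of the paper's step of inserting and deleting the deterministic variables $x_k, x^{k-1}$ from the conditioning.
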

\begin{proof}
See Appendix \ref{prp3}.
\end{proof}
For a Gaussian noise sequence $z^n$ with zero mean and covariance $Z_n$, it was further shown in \cite{cover:1989} that the optimal input distribution of the sequence $x^n$ that maximizes $C_n$ is obtained for $x^n = B_n z^n + v^n$,   
where $B_n\in \R^{n\times n}$ is strictly lower triangular,  $v^n$ is a Gaussian sequence with covariance $\mathcal{V}_n\in \mathbb{S}^n_{+}$, and the pair $(B_n, \mathcal{V}_n)$ satisfies the power constraint
$$
\tr(B_n Z_nB_n^\T + \mathcal{V}_n) = \E{|x^n|^2} \leq nP
$$

Proposition \ref{coverRes} gives the relation $\I(W ; y^n)  =  h(y^n) - h(z^n)$.
Thus, the mutual information between $W$ and $y^n$ satisfies
\[
\begin{aligned}
	 	\I(W ; y^n) 
		&=  h(y^n) - h(z^n) \\
		&\le  \frac{1}{2} \log_2 \frac{\det(\mathcal{V}_n + (B_n+I)Z_n(B_n+I)^\T)}{\det (Z_n)} \\
		&=  \frac{1}{2} \log_2 \det(\mathcal{V}_n + (B_n+I)Z_n(B_n+I)^\T) \\
		&~~~ - \frac{1}{2} \log_2 \det(Z_n)\\
		&\triangleq \mathcal{I}(B_n, \mathcal{V}_n, Z_n)
\end{aligned}
\]

The value of $C_n$ over the time horizon $n$ is given by 
\[
C_n  = \sup_{\substack{B_n\\ \mathcal{V}_n\succeq 0\\ \tr(B_n Z_nB_n^\T + \mathcal{V}_n) \leq nP}} \frac{1}{n}  \mathcal{I}(B_n, \mathcal{V}_n, Z_n)
\]
where the maximum is taken over $B_n\in \R^{n\times n}$ being strictly lower triangular. 

\begin{theorem}
\label{infhorizon_thm}
The average feedback capacity $C$ in Problem \ref{problem1} over an infinite time horizon can be obtained by optimizing over linear strategies $x_k=\sB(\z)z_k+v_k$  with 
$$
\sB(\z) = \sum_{l=1}^\infty b_l \z^{-l},
$$
$v_k\sim \gv(0, V)$, and $V>0$. The capacity $C$ is 
the optimal value of
\begin{equation}
\label{infhorizon_capacity}
	\begin{aligned}
		\sup_{\substack{V>0\\ \sB(\z)= \sum_{l=1}^\infty b_l \z^{-l}}} & \int_{-\pi}^{\pi} \frac{1}{2} \log 
		\frac{V+ |\sB(e^{i \theta})+1|^2 \sS_z(e^{i \theta}) }{
		\sS_z(e^{i \theta})
		}
		\frac{d\theta}{2\pi}\\
		\textup{s.t.  }~~~~ & V+ \int_{-\pi}^{\pi}  |\sB(e^{i \theta})|^2\sS_z(e^{i \theta})
		\frac{d\theta}{2\pi} \le P
	\end{aligned}
\end{equation}
In particular, the structure of an optimal strategy is given by $x_k = L_k \zc_k + v_k$, where
$\zc_k = \zh_k - \zb_k$, $\zh_k = \E{z_k |z^{k-1}}$,  and $\zb_k=\E{\zh_k |y^{k-1}}$.

\end{theorem}
\begin{proof}
See Appendix \ref{thm1}.
\end{proof}
Theorem \ref{infhorizon_thm} provides a somewhat simplified formulation of the Gaussian average feedback channel capacity  compared to the result in \cite[Theorem 3.2]{kim:2010}, where we find that  
the capacity is optimized over white noise processes $v$ instead of all admissible 
processes with spectral densities $\sS_v(e^{i \theta})$ as in \cite{kim:2010}. Also, an optimal communication strategy structure is revealed, and it's linear in the stochastic variable $\zc_k$ which is the difference between the 
estimate $\zh_k$ of the noise $z_k$, based on the previous measurements of the noise process 
$z^{k-1}$, and the estimate of $\zh_k$ based on the channel output measurements $y^{k-1}$.

\section{Gaussian Channel with Noise Process of Finite Order}
\label{finiteorderfinite}
Let $z$ be a Gaussian process of finite order given by the state space equations (\ref{linsys0}). 
In \cite{kim:2010}, by using the results of \cite{cover:1989} above, it was shown that the optimal affine strategy of the transmitted symbols $x_k$ to maximize the capacity at stationarity (that is, as
$n\rightarrow \infty$), is given by
$$
x_k = X(s_k - \E{s_k |y^{k-1}}) +v_k
$$
where $v$ is a white Gaussian process independent of $u$ with $v_k\sim \gv(0,V)$, and $X^\T$ is some vector in $\R^{m}$. In fact, it was shown that $v=0$ is optimal, but we will keep it as an optimization parameter as it will simplify the optimization problem considerably. 

We give here and in section \ref{finiteorderstationary} new (and shorter) proofs that summarize the results of \cite{cover:1989} and \cite{kim:2010} on the value of $C_n$ and the value of the average feedback capacity $C$.

\begin{theorem}
\label{optimalTx}
The average feedback capacity in Problem \ref{problem1} over a finite horizon $k=1, ..., n$ with the noise process $z$ given by (\ref{linsys0})
is achieved by $x_k = X_k\tilde{s}_k + v_k$ for some set of vectors $\{X_k^\T\}$, where $\sh_k=\E{s_k |y^{k-1}}$, $\st_k = s_k - \sh_k$, and
$\{v_k\}$ is a white Gaussian process with $v_k\sim \gv(0, V_k)$, independent of $u$. The value of $C_n$ is given by
\begin{equation}
\label{sumCap}
C_n = \sup_{\substack{X_1, ..., X_n\\ V_1, ..., V_n\ge 0}} \frac{1}{2n}\sum_{k=1}^n \log_2(Y_k) 
\end{equation}
subject to $\Sigma_1 = 0$,
$$P\ge X_k\Sigma_k X_k^\T + V_k, $$
\begin{equation}
\label{Yrec}
Y_k = (X_k+H)\Sigma_k (X_k+H)^\T +  V_k + 1,
\end{equation}
\begin{align}
\label{Gammarec}
\Gamma_k 
		&= (F\Sigma_k (X_k +H)^\T + G)Y_k^{-1}
\end{align}
and
\begin{equation}
\label{riccrec}
\begin{aligned}
\Sigma_{k+1} 	&= (F-\Gamma_k(X_k+H))\Sigma_k (F-\Gamma_k(X_k+H))^\T \\
			& ~~~ + (G-\Gamma_k)(G-\Gamma_k)^\T + \Gamma_k V_k \Gamma_k^\T\\ 
			&= F\Sigma_k F^\T + GG^\T - \Gamma_k Y_k \Gamma_k^\T 
\end{aligned}
\end{equation}

\end{theorem}
\begin{proof}
See Appendix \ref{thm2}.
\end{proof}

\begin{remark}
\textup{The recursive optimization problem in Theorem \ref{optimalTx} given by equations (\ref{sumCap})--(\ref{riccrec}) can be solved 
by introducing a Lagrange multiplier corresponding to the power constraint and use the bisection method with respect to that 
Lagrange mulitplier and then solve the optimization problem using dynamic programming for each fixed value of the Lagrange mulitplier. For further details, consult \cite{gattami:tac:10}.}
\end{remark}

\section{Feedback Capacity with Process Noise of Finite Order at Stationarity}
\label{finiteorderstationary}
Now consider a stationary finite order Gaussian noise process $z$ given by
\begin{equation}
\label{linsys}
\begin{aligned}
s_{k+1} 			&= Fs_k + Gu_k\\
z_k					&= Hs_k + u_k\\
u_k				&\sim \gv(0, 1)\\
\E{u_k u_l} 	&= 0 ~~ \forall k, l ~|~ k\neq l	\\
\end{aligned}
\end{equation}
where $u_k$, $s_k$, and $z_k$ take values in $\R$, $\R^m$, and $\R$, respectively, and we have replaced $s_1 = 0$ in (\ref{linsys0})
with its stationary distribution. 
From linear system theory \cite{zhou:1996}, we know that the pair $(H, F)$ must be detectable in order for the variance of the estimation error $z_k - \E{z_k\mid z^{k-1}}$ to be bounded (which is necessary for the capacity to be positive otherwise the signal to noise ratio will be zero). Since we assumed that $F$ is stable, the pair $(H, F)$ will be detectable. 

The following result states how to compute the Gaussian average feedback channel capacity by using finite-dimensional nonlinear programming. This result is similar to the one given in \cite{kim:2010}, and the computational complexity of the nonlinear programming (nonlinear semidefinite programming) problem is NP-hard in general. 
\begin{theorem}
\label{kim}
The average feedback channel capacity in Problem \ref{problem1}, with
the process $z$ given by (\ref{linsys}), is 
$$C = \frac{1}{2}\log_2(Y)$$
where $Y$ is the solution to the optimization problem
\begin{equation}
\label{nonnegV}
\begin{aligned}
 	\sup_{V>0, Y} \sup_{\substack{X \\ \Sigma \succeq 0}} &~~ Y\\
	\textup{s.t. }   ~~ P &\ge X\Sigma X^\T + V \\
				 ~~ \Sigma &= F\Sigma F^\T + GG^\T - \Gamma Y \Gamma^\T \\
				 ~~ \Gamma &= (F\Sigma (X+H)^\T + G)Y^{-1} \\
				 ~~ Y &= (X+H)\Sigma (X+H)^\T +  V + 1
\end{aligned}
\end{equation}
\end{theorem}
\begin{proof}
See Appendix \ref{thm3}.
\end{proof}

Note that taking $V=0$ renders the same (nonconvex) optimization problem as that in \cite{kim:2010}. However, this case
is not achievable since the mutual information vanishes for $V=0$.

\begin{lemma}
\label{epslemma}
Optimization problem (\ref{nonnegV}) is equivalent to 
\begin{equation}
\label{ineqricc}
\begin{aligned}
 	\sup_{V>0} \sup_{\substack{X \\ \Sigma \succeq 0}} &~~ Y \\
	\text{s.t. }   ~~ P &\ge X\Sigma X^\T + V \\
				 ~~ \Sigma &\preceq  F\Sigma F^\T + GG^\T - \Gamma Y \Gamma^\T \\
				 ~~ \Gamma &= (F\Sigma (X+H)^\T + G)Y^{-1} \\
				 ~~ Y &= (X+H)\Sigma (X+H)^\T +  V + 1
\end{aligned}
\end{equation}
\end{lemma}
\begin{proof}
See Appendix \ref{lem1}.
\end{proof}
Now we turn to the recursion 
\begin{equation*}
\begin{aligned}
\Sigma_{k+1} 	&= (F-\Gamma_k(X_k+H))\Sigma_k (F-\Gamma_k(X_k+H))^\T \\[2mm]
& ~~~ + (G-\Gamma_k)(G-\Gamma_k)^\T + \Gamma_k V_k \Gamma_k^\T\\[2mm]
&= F\Sigma_k F^\T + GG^\T - \Gamma_k Y_k \Gamma_k^\T \\[2mm]
\Gamma_k 
&= (F\Sigma_k (X_k +H)^\T + G)Y_k^{-1}
\end{aligned}
\end{equation*}
at stationarity where for all $k$,
$$Y_k = Y$$
$$X_k=X,$$
$$V_k=V,$$
$$\Gamma_k = \Gamma,$$
$$\Sigma_k = \Sigma,$$
and $\Sigma$ is the unique solution to the Riccati equation
\begin{equation*}
\begin{aligned}
	\Sigma 	&= (F-\Gamma(X+H))\Sigma (F-\Gamma(X+H))^\T \\
	& ~~~ + (G-\Gamma)(G-\Gamma)^\T + \Gamma V \Gamma^\T\\ 
	&= F\Sigma F^\T + GG^\T - \Gamma Y \Gamma^\T 
\end{aligned}
\end{equation*}
with
$$
\Gamma = (F\Sigma (X+H)^\T + G)Y^{-1}.
$$
We will utilize that $V>0$ to show that the pair
$$
(F-\Gamma(X+H), (G-\Gamma)(G-\Gamma)^\T + \Gamma V \Gamma^\T)
$$
is controllable. 
\begin{lemma}
\label{controllablepair}
Suppose that $(F, G)$ is controllable and $V>0$. Then, the pair
$$
(F-\Gamma(X+H), (G-\Gamma)(G-\Gamma)^\T + \Gamma V \Gamma^\T)
$$
is controllable.
\end{lemma}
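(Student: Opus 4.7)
My plan is to use the PBH-style test given in Proposition \ref{contcond}: show that any nonzero left eigenvector of $A\coleq F-\Gamma(X+H)$ that annihilates $Q\coleq (G-\Gamma)(G-\Gamma)^\T + \Gamma V \Gamma^\T$ would also be a left eigenvector of $F$ that annihilates $G$, contradicting controllability of $(F,G)$.

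The first step is to exploit the positive semi-definite structure of $Q$. Writing
\[
Q = \begin{pmatrix} G-\Gamma & \Gamma \end{pmatrix}\begin{pmatrix} 1 & 0 \\ 0 & V \end{pmatrix}\begin{pmatrix} (G-\Gamma)^\T \\ \Gamma^\T \end{pmatrix},
\]
I note that for any vector $x\in \R^m$ the identity $x^\T Q x = |x^\T(G-\Gamma)|^2 + V\,|x^\T\Gamma|^2$ holds. Thus $x^\T Q = 0$ implies $x^\T Q x = 0$, and since $V>0$ this forces both $x^\T(G-\Gamma)=0$ and $x^\T\Gamma=0$, from which $x^\T G = 0$ immediately follows.

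Next, I would argue by contradiction using Proposition \ref{contcond}. Suppose $(A,Q)$ is not controllable. Then there exist $x\ne 0$ and $\lambda\in\C$ with $x^\T A = \lambda x^\T$ and $x^\T Q = 0$. From the previous paragraph, $x^\T \Gamma = 0$ and $x^\T G = 0$. Substituting $x^\T\Gamma=0$ into the eigenvector equation gives
\[
\lambda x^\T = x^\T\bigl(F - \Gamma(X+H)\bigr) = x^\T F - (x^\T\Gamma)(X+H) = x^\T F.
\]
Hence $x^\T F = \lambda x^\T$ and $x^\T G = 0$ with $x\ne 0$, which by Proposition \ref{contcond} contradicts the hypothesis that $(F,G)$ is controllable.

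The argument is essentially a one-line PBH test once the PSD decomposition is in place, so I do not expect any serious obstacle; the only point requiring a little care is the step from $x^\T Q = 0$ to the two separate orthogonality conditions $x^\T\Gamma = 0$ and $x^\T(G-\Gamma) = 0$, where the hypothesis $V>0$ is essential (if $V=0$ were allowed, the $\Gamma V\Gamma^\T$ term would disappear and one could not conclude $x^\T\Gamma = 0$, so the reduction to the controllability of $(F,G)$ would break down).
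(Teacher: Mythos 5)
Your proof is correct and follows essentially the same route as the paper's: both apply the PBH test of Proposition \ref{contcond}, use the positive semi-definiteness of the two summands of $Q$ together with $V>0$ to deduce $x^\T\Gamma=0$ and $x^\T G=0$, and then contradict the controllability of $(F,G)$. Your write-up is, if anything, slightly cleaner in making explicit the substitution of $x^\T\Gamma=0$ into the eigenvector equation.
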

\begin{proof}
See Appendix \ref{lem2}.
\end{proof}
\begin{lemma}
\label{stableF}
Suppose that $(F, G)$ is controllable, $V>0$, and
\begin{equation*}
\begin{aligned}
\Sigma 	&= (F-\Gamma(X+H))\Sigma (F-\Gamma(X+H))^\T \\
			& ~~~ + (G-\Gamma)(G-\Gamma)^\T + \Gamma V \Gamma^\T\\ 
\end{aligned}
\end{equation*}
Then, $F-\Gamma(X+H)$ is stable.
\end{lemma}
\begin{proof}
See Appendix \ref{lem3}.
\end{proof}

\begin{lemma}
\label{invertibleS}
Suppose that $(F, G)$ is controllable and $V>0$ and let $\Sigma$ satisfy the Riccati equation
\begin{equation*}
\begin{aligned}
\Sigma 	&= (F-\Gamma(X+H))\Sigma (F-\Gamma(X+H))^\T \\
			& ~~~ + (G-\Gamma)(G-\Gamma)^\T + \Gamma V \Gamma^\T\\ 
\end{aligned}
\end{equation*}
Then, $\Sigma \succ 0$.
\end{lemma}
\begin{proof}
See Appendix \ref{lem4}.
\end{proof}

Since Lemma \ref{invertibleS} established the invertibility of $\Sigma$, we see that optimization problem (\ref{nonnegV}) is equivalent to that of optimizing over strictly positive definite matrices $\Sigma$, that is
\begin{equation}
\label{posSigma}
\begin{aligned}
 	\sup_{V>0} \sup_{\substack{X \\ \Sigma \succ 0}} &~~ Y\\
	\text{s.t. }	~~ P &\ge X\Sigma X^\T + V \\
				~~ \Sigma &\preceq F\Sigma F^\T + GG^\T - \Gamma Y \Gamma^\T \\
				~~ \Gamma &= (F\Sigma (X+H)^\T + G)Y^{-1} \\
				~~ Y &= (X+H)\Sigma (X+H)^\T +  V + 1
\end{aligned}
\end{equation}

Optimization problem (\ref{posSigma}) can now be transformed to a semi-definite program.
The trick is to first eliminate the dependence on $V$ and obtain the desired inequalities instead of equalities. Then, by making a variable substitution according to $K = X\Sigma$, which is possible since $\Sigma$ is invertible, we will be able to use a Schur complement argument in order to transform the constraints in (\ref{posSigma}) into a set of linear matrix inequalities (LMI:s). 

We are now ready to state the main result of this paper. 
\begin{theorem}
\label{main}
The average feedback capacity of the Gaussian channel is given by 
$$C = \frac{1}{2}\log_2 \left(Y \right)$$ 
where $Y$ is the optimal solution of 
\begin{equation}
\label{maxminopt}
\begin{aligned}
 	\sup_{\substack{K \\ \Sigma\succ 0}} &~~   Y \\
	\textup{s.t. }& \\   0 \prec & \begin{pmatrix}
										P		&  K\\
										K^\T	& \Sigma
									\end{pmatrix}\\
					~~	 0 \preceq & 
					\begin{pmatrix}
						F\Sigma F^\T - \Sigma + GG^\T 		& FK^\T + F\Sigma H^\T + G\\
						(FK^\T + F\Sigma H^\T + G)^\T	& Y
					\end{pmatrix} \\
					Y=&  K H^\T + HK^\T + H\Sigma H^\T + P + 1
\end{aligned}
\end{equation}
\end{theorem}
\begin{proof}
See Appendix \ref{thm4}.
\end{proof}

Theorem \ref{main} shows that the Gaussian feedback channel capacity can be computed
efficiently using semidefinite programming as opposed to the nonlinear programming formulation for the computation of the feedback channel capacity in \cite{kim:2010}, which is not tractable in practice based on the optimization tools to date. 

\section{Example}
\label{example}
Consider a communication feedback channel with Gaussian noise given by a first order process according to
$$z_k + \beta z_{k-1} = u_k + \alpha u_{k-1}$$
with state space representation
\begin{align*}
s_{k+1} 	&= -\beta s_k + u_k\\
z_k 		&= (\alpha - \beta)s_k + u_k\\
y_k 		&= x_k + z_k
\end{align*}
$\alpha\in [-1, 1]$, $\beta \in (-1, 1)$, $u_k\sim \gv(0,1)$, and
\[ \sigma = \textup{sign}(\beta -\alpha) =
  \begin{cases}
    1       & \quad \textup{if } \beta > \alpha\\
    0       & \quad \textup{if } \beta = \alpha\\
    -1  & \quad \textup{if } \beta < \alpha
  \end{cases}
\]

In \cite{kim:2010}, it was shown that the feedback capacity of the above channel with a power constraint $\E{x_k^2}\le P$
is given by
$$
-\log_2(r)
$$
where $r$ is the unique positive real root of the polynomial equation
$$
(\alpha^2+\beta^2 P)r^4 + 2\sigma(\alpha+\beta P)r^3+(P+1-\alpha^2)r^2 - 2\sigma \alpha r - 1 = 0
$$
This can be compared to the solution of the semi-definite optimization problem (\ref{maxminopt}). 
By noting that $F = -\beta$, $G = 1$, and $H = \alpha - \beta$, one can verify numerically that the channel capacity 
$\frac{1}{2}\log_2(Y)$ coincides with that of the polynomial solution $-\log_2(r)$(see the Matlab code in Appendix \ref{code}, where we used CVX, a package for specifying and solving convex programs \cite{cvx, gb08}).

\section{Conclusion}
We considered the problem of finding the average capacity of the Gaussian noise channel where
the noise has an arbitrary spectral density. We introduced a new approach to the problem where we solved
the problem over a finite number of transmissions and then considered the limit of the average capacity as the number of
transmissions went to infinity. 
We also provided new proofs and results of the non stationary solution which bridges the gap between results in the literature \cite{cover:1989, kim:2010}. For the stationary Gaussian noise, 
we showed that the average feedback capacity $C$ over an infinite time horizon can be obtained by optimizing over linear strategies $x_k=\sB(\z)z_k+v_k$  with 
$$
\sB(\z) = \sum_{l=1}^\infty b_l \z^{-l}
$$
$v_k\sim \gv(0, V)$, and $V>0$. The capacity $C$ is 
the optimal value of
\begin{equation*}
	\begin{aligned}
		\sup_{\substack{V>0\\ \sB(\z)= \sum_{l=1}^\infty b_l \z^{-l}}} & \int_{-\pi}^{\pi} \frac{1}{2} \log 
		\frac{V+ |\sB(e^{i \theta})+1|^2 \sS_z(e^{i \theta}) }{
		\sS_z(e^{i \theta})
		}
		\frac{d\theta}{2\pi}\\
		\textup{s.t.  }~~~~ & V+ \int_{-\pi}^{\pi}  |\sB(e^{i \theta})|^2\sS_z(e^{i \theta})
		\frac{d\theta}{2\pi} \le P
	\end{aligned}
\end{equation*}
In particular, we showed that the structure of an optimal strategy is given by $x_k = L_k \zc_k + v_k$, where
$\zc_k = \zh_k - \zb_k$, $\zh_k = \E{z_k |z^{k-1}}$,  and $\zb_k=\E{\zh_k |y^{k-1}}$.
Thus, an optimal communication strategy structure was revealed. An optimal strategy is linear in the stochastic variable $\zc_k$ which is the difference between
the the estimate $\zh_k$ of the noise $z_k$ based on the previous measurements of the noise process 
$z^{k-1}$ and the estimate of $\zh_k$ based on the channel output measurements $y^{k-1}$.

We also considered the special case of stationary Gaussian noise with finite memory. 
We showed that the channel capacity at stationarity can be found by solving a semi-definite program, and hence computationally tractable. 
In particular, we showed that the average feedback channel capacity is given by
$$
C = \frac{1}{2} \log_2(Y)
$$
where $Y\in \R$ is the optimal solution to the optimization problem
given by (\ref{maxminopt}).

\section*{Acknowledgment}
The author is grateful to the Associate Editor and the reviewers for constructive feedback and for useful comments and suggestions.


\appendix

\subsection{Results from System Theory}

\begin{proposition}
\label{riccinv}
Let $Q\succeq 0$ and $F$ stable. Then, the unique positive semi-definite solution $\Sigma$ to the Lyapunov equation
$$
\Sigma = F\Sigma F^\T + Q
$$
is invertible if $(F, Q)$ is controllable.
\end{proposition}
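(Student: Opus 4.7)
The plan is to exploit the explicit infinite-series representation of $\Sigma$ available when $F$ is stable, reduce invertibility to a kernel condition, and then identify that condition with the controllability hypothesis via the PBH (Hautus) lemma.

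First, I iterate the Lyapunov equation to obtain, for every $k \ge 1$,
$$\Sigma = F^k \Sigma (F^\T)^k + \sum_{j=0}^{k-1} F^j Q (F^\T)^j,$$
and use stability of $F$ to pass to the limit $k \to \infty$, yielding the closed form
$$\Sigma = \sum_{j=0}^{\infty} F^j Q (F^\T)^j,$$
which is manifestly positive semi-definite and which is readily seen to be the unique PSD solution.

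Next, I factor $Q = GG^\T$ (possible since $Q \succeq 0$) and observe that
$$x^\T \Sigma x = \sum_{j=0}^{\infty} \| G^\T (F^\T)^j x \|^2$$
for every $x \in \R^m$. Hence $\Sigma x = 0$ iff $G^\T (F^\T)^j x = 0$ for all $j \ge 0$, which by Cayley-Hamilton reduces to $j = 0, 1, \ldots, m-1$. Equivalently, $\ker \Sigma$ equals the left kernel of the classical controllability matrix $[\, G \ \ FG \ \ \cdots \ \ F^{m-1}G\, ]$.

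Finally, I bridge the two controllability formulations: the excerpt's PBH condition on $(F, Q)$, namely full row rank of $(F - \lambda I, \ Q)$ for all $\lambda \in \C$, coincides with the PBH condition on $(F, G)$, because $x^\T G G^\T = 0$ is equivalent to $G^\T x = 0$, so the left null spaces of the two matrices agree. The Hautus lemma then equates this PBH condition with full row rank of the controllability matrix of $(F,G)$, which by the previous step forces $\ker \Sigma = \{0\}$, i.e. $\Sigma \succ 0$. The only step demanding real care is this translation between $(F, Q)$ and $(F, G)$; everything else follows directly from stability and Cayley-Hamilton.
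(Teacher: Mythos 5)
Your proof is correct. The paper does not actually supply an argument for this proposition---it simply defers to the cited reference \cite{zhou:1996}---so there is no in-paper proof to compare against; what you have written is a correct, self-contained version of the standard textbook argument: the geometric-series representation $\Sigma=\sum_{j\ge 0}F^jQ(F^\T)^j$ valid under stability, the identification of $\ker\Sigma$ with the left kernel of the controllability matrix of $(F,G)$ where $Q=GG^\T$ (via Cayley--Hamilton), and the PBH equivalence. The one step you rightly flag as needing care---that the PBH conditions for $(F,Q)$ and $(F,G)$ coincide because $x^\T GG^\T=0$ iff $G^\T x=0$---is handled correctly, and it is also exactly the manipulation the paper itself uses later in the proof of Lemma \ref{controllablepair}, so your argument is fully consistent with the paper's conventions.
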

\begin{proof}
Consult \cite{zhou:1996}.
\end{proof}

\begin{proposition}
\label{contcond}
Let $F\in \R^{m\times m}$ and $G \in \R^{m\times r}$. Then, the pair $(F, G)$ is controllable if and only if there does not exist a vector $x\ne 0$ and a scalar $\lambda\in \C$ such that $x^\T F = \lambda x^\T $ and $x^\T G = 0$.
\end{proposition}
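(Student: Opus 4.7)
The plan is to prove the two directions by contrapositive, reducing each to the elementary linear-algebra fact that a matrix $M \in \C^{m\times (m+r)}$ fails to have full row rank if and only if there exists a non-zero vector $x$ with $x^\T M = 0$. Applying this to $M = \bigl(F-\lambda I \mid G\bigr)$ converts the PBH-style rank condition in the definition of controllability directly into the existence of a joint left eigenvector of $F$ lying in the left kernel of $G$.

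For the "only if" direction, I would assume the existence of $x \ne 0$ and $\lambda\in\C$ with $x^\T F = \lambda x^\T$ and $x^\T G = 0$. Concatenating these two identities gives
\begin{equation*}
  x^\T \bigl(F - \lambda I \mid G\bigr) = \bigl( x^\T F - \lambda x^\T \mid x^\T G \bigr) = 0.
\end{equation*}
Thus $\bigl(F - \lambda I \mid G\bigr)$ admits a non-trivial left null vector, so it cannot have full row rank, and by the definition in the excerpt $(F,G)$ fails to be controllable.

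For the converse, I would assume $(F,G)$ is not controllable, so by definition there exists some $\lambda\in\C$ for which $\bigl(F-\lambda I\mid G\bigr)$ is row-rank deficient. The same elementary fact then produces a non-zero $x$ with $x^\T\bigl(F-\lambda I\mid G\bigr) = 0$; reading off the two blocks yields $x^\T F = \lambda x^\T$ and $x^\T G = 0$, as required.

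The only point where one must be a little careful, rather than a genuine obstacle, is that when $F$ is real but $\lambda$ is a non-real complex number, the associated left eigenvector $x$ must be taken in $\C^m$; this is consistent with the statement, which does not restrict $x$ to $\R^m$. With that understanding the argument is purely the standard Popov–Belevitch–Hautus equivalence and requires no additional machinery.
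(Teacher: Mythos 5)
Your proof is correct: it is the standard Popov--Belevitch--Hautus argument, obtained by directly unwinding the paper's own rank definition of controllability via the elementary fact that an $m\times(m+r)$ matrix fails to have full row rank if and only if it has a non-zero left null vector in $\C^m$. The paper itself gives no proof (it only cites a textbook), so your argument supplies exactly the intended reasoning, including the correct observation that $x$ must be allowed to be complex when $\lambda$ is.
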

\begin{proof}
Consult \cite{zhou:1996}.
\end{proof}

\subsection{Proof of Proposition \ref{ssRicc}}
\label{prp2}
Note that  $\sh_k = \E{s_k | z^{k-1}}$ and $\st_k = s_k - \sh_k$ are given by the Kalman filter (\ref{kalmanfilter}), so $z_k - \E{z_k | z^{k-1}} = \zt_k $
is the Gaussian output estimation error given by the Kalman filter (\ref{kalmanfilter}).
Thus, the differential entropy chain rule gives the equality
$$
h(z^n) = \sum_{k=1}^n h(\zt_k) = \frac{1}{2}\sum_{k=1}^n \log_2 \left( 2\pi e(HS_kH^\T + 1)\right) 
$$
Since the pair $(H, F)$ is detectable, the recursion (\ref{riccRecursion}) has a unique stationary positive semi-definite solution $S_k = S$. The stationary differential entropy rate is then given by
$$
h(z) = \lim_{n\rightarrow \infty} \frac{1}{n} h(z^n) = \frac{1}{2}\log_2 \left( 2\pi e(HS H^\T + 1)\right) 
$$
where $S$ is the unique positive semi-definite solution to the Riccati equation (\ref{ssRiccati}). \qed

\subsection{Proof of Proposition \ref{coverRes}}
\label{prp3}
\begin{align}
\I(W; y^n) 	&= h(y^n) - h(y^n\mid W) \nonumber\\
			&= h(y^n) - \sum_{k=1}^n h(y_k \mid W, y^{k-1})\nonumber\\
			&= h(y^n) - \sum_{k=1}^n h(y_k \mid W, y^{k-1}, x_k(W, y^{k-1}), x^{k-1}) \label{1}\\
			&= h(y^n) - \sum_{k=1}^n h(y_k \mid W, z^{k-1}, x_k, x^{k-1})\label{2}\\
			&= h(y^n) - \sum_{k=1}^n h(z_k \mid W, z^{k-1}, x_k,  x^{k-1})\label{3}\\
			&= h(y^n) - \sum_{k=1}^n h(z_k \mid W, z^{k-1})\label{4}\\
			&= h(y^n) - h(z^n \mid W)\nonumber \\
			&= h(y^n) - h(z^n) \label{5}
\end{align}
where (\ref{1}) follows from the fact that $x_i = f_i(W, y^{i-1})$ and so $x_i$ is determined by $(W, y^{i-1})$,
(\ref{2}) follows from the equality $y^{k-1} = x^{k-1}+z^{k-1}$, (\ref{3}) follows from $y_k = x_k + z_k$, 
(\ref{4}) follows from the fact that $x^k$ is determined from $W$ and $z^{k-1}$ by the recursion $x_i = f_i(W, x^{i-1}+z^{i-1})$, 
and (\ref{5}) follows from the independence between $z^n$ and $W$. \qed

\subsection{Proof of Theorem \ref{infhorizon_thm}}
\label{thm1}
The achievability is shown in section VII in \cite{cover:1989}. We will here show the converse.
According to Proposition \ref{coverRes}, we have that
$$
\I(W; y^n) = h(y^n) - h(z^n)
$$
Let $\zh_k=\E{z_k |z^{k-1}}$, $\zt_k = z_k - \zh_k$,  $\zb_k=\E{\zh_k |y^{k-1}}$, and $\zc_k = \zh_k - \zb_k$.
Note that $\zt_k$ is independent of $z^{k-1}$ and hence independent of $y^{k-1}$, $x^k$, and $\zb^{k-1}$.  
Let $\xh_k=\E{x_k |y^{k-1}}$, $\xt_k = x_k - \xh_k$, and $$\yt_k = y_k - \xh_k -\zb_k = \xt_k + \zc_k + \zt_k$$
Then,
\begin{eqnarray}
h(y^n) 	&=& \sum_{k=1}^n h(y_k\mid y^{k-1}) \label{hyn}\\
		&=& \sum_{k=1}^n h(\tilde{y}_k\mid y^{k-1}) \label{yhat0}\\
		&\le& \sum_{k=1}^n h(\tilde{y}_k) \label{ytilde0}\\
		&\le& \frac{1}{2}\sum_{k=1}^n \log_2(2\pi e \E{\tilde{y}_k^2}) \label{gaussiany0} \label{logtildey}
\end{eqnarray}
where (\ref{yhat0}) follows from $y_k = \hat{y}_k + \tilde{y}_k$ and $\hat{y}_k$ is a function of $y^{k-1}$, (\ref{ytilde0}) follows from the fact that conditioning only reduces the differential entropy (with equality if $\yt_k$ is independent of $y^{k-1}$) and (\ref{gaussiany0}) follows from the fact that for a fixed covariance, the Gaussian distribution has the maximum entropy.
We will now show that the upper bound of the differential entropy $h(y^n)$ given by (\ref{gaussiany0}) can be achieved by showing
that $$Y_k \triangleq \E{\tilde{y}_k^2}$$ can be achieved by the transmission strategy $x_k = L_k\zc_k + v_k$ for some $L_k\in \R$ and temporally uncorrelated Gaussian variables $\{v_k\}$.
Let $\E{\zt_k^2} = \sigma_k$ and 
\begin{equation}
\label{cov0}
\E{
\begin{pmatrix}
	\zc_k\\
	\xt_k
\end{pmatrix}
\begin{pmatrix}
	\zc_k\\
	\xt_k
\end{pmatrix}^\intercal
} =
\begin{pmatrix}
	\zeta_k		&	\psi_k\\
	\psi_k		&	\xi_k
\end{pmatrix} \succeq 0
\end{equation}
be an achievable covariance matrix by some strategy $x_k$.  
Then,
\begin{align*}
Y_k &=
\E{\tilde{y}_k^2} \\
&= \E{(\xt_k + \zc_k + \zt_k)^2}\\
&= \E{(\xt_k + \zc_k)^2}+\E{\zt_k^2}\\
&= \E{
\begin{pmatrix}
	\zc_k\\
	\xt_k
\end{pmatrix}^\intercal
\begin{pmatrix}
	1	&  1 \\
	1 	&	1
\end{pmatrix}
\begin{pmatrix}
	\zc_k\\
	\xt_k
\end{pmatrix}
} + \sigma_k\\
&=
\E{\tr\lp
\begin{pmatrix}
	1	&  1 \\
	1 	&	1
\end{pmatrix}
\begin{pmatrix}
	\zc_k\\
	\xt_k
\end{pmatrix}
\begin{pmatrix}
	\zc_k\\
	\xt_k
\end{pmatrix}^\intercal\rp
}+ \sigma_k \\
&=
\tr \lp
\begin{pmatrix}
	1 &  1 \\
	1 &  1
\end{pmatrix}
\begin{pmatrix}
	\zeta_k		&	\psi_k\\
	\psi_k		&	\xi_k
\end{pmatrix}\rp + \sigma_k
\end{align*}
The Schur complement in $\xi_k$ of
$$
\begin{pmatrix}
	\zeta_k		&	\psi_k\\
	\psi_k		&	\xi_k
\end{pmatrix}
$$
is given by
$$
\phi_k \triangleq \xi_k - \psi_k^2 \zeta_k^{-1} \ge 0
$$
By taking $x_k = L_k \zc_k + v_k$ with $L_k=\psi_k \zeta_k^{-1}$ and $v$ to be a temporally uncorrelated Gaussian process with $v_k\sim \gv(0, \phi_k)$ independent of $u^k$ and $y^{k-1}$ (and thus independent of $\{\zc_k\}$), we will have $\xh_k = \E{x_k\mid y^{k-1}} = \E{ L_k \zc_k + v_k \mid y^{k-1}} = 0$, since $v_k$ is independent of $y^{k-1}$ and $\zc_k = \zh_k - \E{\zh_k |y^{k-1}}$ which is also independent of $y^{k-1}$. 

Then, $\xt_k = x_k - \xh_k = x_k$, and we will 
get a sequence of pairs $(\zc_k, x_k)$ that renders the covariance matrix (\ref{cov0}) since
\begin{equation}
\begin{aligned}
&\E{
\begin{pmatrix}
	\zc_k\\
	\xt_k
\end{pmatrix}
\begin{pmatrix}
	\zc_k\\
	\xt_k
\end{pmatrix}^\intercal
} \\
&=\E{
\begin{pmatrix}
	\zc_k\\
	\psi_k \zeta_k^{-1} \zc_k + v_k
\end{pmatrix}
\begin{pmatrix}
	\zc_k\\
	\psi_k \zeta_k^{-1} \zc_k + v_k
\end{pmatrix}^\intercal
} \\
&=\begin{pmatrix}
	\zeta_k		&	\psi_k\\
	\psi_k		&	\psi_k^2 \zeta_k^{-1} + \phi_k
\end{pmatrix}\\
&=\begin{pmatrix}
	\zeta_k		&	\psi_k\\
	\psi_k		&	\xi_k
\end{pmatrix} 
\end{aligned}
\end{equation} 
Also, since this strategy is linear and $v$ is a Gaussian process, $y$ is also a Gaussian process and hence, $\yt$ is Gaussian such that $\yt_k$ is independent of $y^{k-1}$, and the entropy upper bound (\ref{gaussiany0}) is achieved.

Hence, an optimal strategy has the form
$$
x_k = \sB_k(\z)z_{k}+v_k
$$ 
with
$$\sB_k(\z)=\sum_{l=1}^{\infty} b_l(k)\z^{-l}$$ 
and $v$ is a white Gaussian process. 

Now for the stationary noise process $z$, $\sigma_k$ will be constant, $\sigma_k = \sigma$. The upperbound (\ref{logtildey}) is maximized when
 $\E{\tilde{y}_1^2} = \cdots = \E{\tilde{y}_n^2}$ since $\log_2(\cdot)$ is a concave function which implies that
\begin{equation}
\label{am-gm}
\begin{aligned}
\frac{1}{n}h(y^n) &= \frac{1}{2n}\sum_{k=1}^n \log_2(2\pi e \E{\tilde{y}_k^2})\\
			&\le \frac{1}{2}\log_2\left(2\pi e \frac{1}{n}\sum_{k=1}^n \E{\tilde{y}_k^2}\right)
\end{aligned}
\end{equation}
with equality if and only if $\E{\tilde{y}_1^2} = \cdots = \E{\tilde{y}_n^2}$. Now $\E{\tilde{y}_k^2} = (L_k+1)^2\zeta_k + \phi_k + \sigma$, and by taking $L_k = L$ and $\phi_k = V$, we get a stationary process $\zc$ and so $\zeta_k = \zeta$.
Thus, we get  $\E{\tilde{y}_1^2} = \cdots = \E{\tilde{y}_n^2}$. 
This implies that we can take $\sB_k$ to be constant, $\sB_k = \sB$, for all $k$.
The spectral density function of the 
stationary Gaussian process $y$ is given by
$$
\sS_y(\z) = V + |\sB(\z) + 1|^2 \sS_z(\z)
$$
where $V>0$. Note that the strict inequality is necessary since $V=0$ implies that $v$ vanishes (almost everywhere), which implies in turn that
the mutual information between $v$ and $y$ vanishes and it would render would render a zero achievable
rate. Now the average power of the process $x$ is given by
\[
V+ \int_{-\pi}^{\pi}  |\sB(e^{i \theta})|^2\sS_z(e^{i \theta})
		\frac{d\theta}{2\pi} \le P
\]
Furthermore, according to Proposition \ref{entropy}, the differential entropies of $y$ and $z$ are given by
$$
h(y) = \int_{-\pi}^{\pi} \frac{1}{2} \log \left(
		2\pi e\left(V+ |\sB(e^{i \theta})+1|^2 \sS_z(e^{i \theta})\right) \right) 
		\frac{d\theta}{2\pi}
$$
and
$$
h(z) =  \int_{-\pi}^{\pi} \frac{1}{2} \log \left(
		2\pi e \sS_z(e^{i \theta})\right)
		\frac{d\theta}{2\pi}
$$
Now the average feedback channel capacity is given by
\begin{align*}
C 	
			&= \lim_{n\rightarrow \infty} \frac{1}{n}h(y^n) - \frac{1}{n}h(z^n)\\ 
			&=  h(y) - h(z)
\end{align*}

Hence, the average feedback channel capacity $C$ is the optimal value of the optimization problem
\begin{equation*}
	\begin{aligned}
		\sup_{\substack{V>0\\ \sB(\z)= \sum_{l=1}^\infty b_l \z^{-l}}} & \int_{-\pi}^{\pi} \frac{1}{2} \log 
		\frac{V+ |\sB(e^{i \theta})+1|^2 \sS_z(e^{i \theta}) }{
		\sS_z(e^{i \theta})
		}
		\frac{d\theta}{2\pi}\\
		\textup{s.t.  }~~~~ & V+ \int_{-\pi}^{\pi}  |\sB(e^{i \theta})|^2\sS_z(e^{i \theta})
		\frac{d\theta}{2\pi} \le P
	\end{aligned} 
\end{equation*}
\qed

\subsection{Proof of Theorem \ref{optimalTx}}
\label{thm2}
According to Proposition \ref{coverRes}, we have that
$$
\I(W; y^n) = h(y^n) - h(z^n)
$$
Let $\zh_k=\E{z_k |z^{k-1}}$ and $\zt_k = z_k - \zh_k$.
We have that 
$$
h(z^n) = \sum_{k=1}^n h(z_k\mid z^{k-1}) =  \sum_{k=1}^n h(\tilde{z}_k)
$$
Since $s_1 =  0$, we get $\tilde{z}_1 = z_1 = u_1$. We also have that 
$\E{z_2\mid z_1} = Hs_2$, and so $\tilde{z}_2 = u_2$. Inductively, we get that
$\tilde{z}_k = u_k$ for $k=1, ..., n$. Thus,
$$
h(z^n) = \sum_{k=1}^n h(\tilde{z}_k)=\sum_{k=1}^n h(u_k) = \frac{1}{2}\sum_{k=1}^n \log_2(2\pi e)
$$

Let $\sh_k=\E{s_k |y^{k-1}}$, $\st_k = s_k - \sh_k$, and $$\yt_k = y_k -H\sh_k= x_k + H\st_k + u_k.$$
Note that similar to the proof of Theorem \ref{infhorizon_thm}
we have that
\begin{eqnarray}
h(y^n) 	
		&\le& \frac{1}{2}\sum_{k=1}^n \log_2(2\pi e \E{\tilde{y}_k^2}) \label{gaussiany}
\end{eqnarray}
We will now show that the upper bound of the differential entropy $h(y^n)$ given by (\ref{gaussiany}) can be achieved by showing
that $$Y_k \triangleq \E{\tilde{y}_k^2}$$ can be achieved by the transmission strategy $x_k = X_k\st_k + v_k$ for some $X_k\in \R$ and temporally uncorrelated Gaussian variables $\{v_k\}$.

First note that $\E{\tilde{y}_k^2} = \E{(x_k + H\st_k + u_k)^2}=\E{(x_k + H\st_k)^2} + \E{u_k^2}$ since $u_k$ is 
independent of $x_k$ and $\st_k$.
Now let
\begin{equation}
\label{cov}
\E{
\begin{pmatrix}
	\st_k\\
	x_k
\end{pmatrix}
\begin{pmatrix}
	\st_k\\
	x_k
\end{pmatrix}^\intercal
} =
\begin{pmatrix}
	\Sigma_k		&	\Psi_k\\
	\Psi^\T_k		&	\Xi_k
\end{pmatrix}
\end{equation}
be an achievable covariance matrix by some strategy $x_k$.  
Then,
\begin{align*}
Y_k &=
\E{\tilde{y}_k^2} \\
&= \E{(x_k + H\st_k)^2} + \E{u_k^2}\\
&= \E{
\begin{pmatrix}
	\st_k\\
	x_k
\end{pmatrix}^\intercal
\begin{pmatrix}
	H^\T H	& H^\T \\
	H 		&	1
\end{pmatrix}
\begin{pmatrix}
	\st_k\\
	x_k
\end{pmatrix}
} + 1\\
&=
\E{\tr\lp
\begin{pmatrix}
	H^\T H	& H^\T \\
	H 		&	1
\end{pmatrix}
\begin{pmatrix}
	\st_k\\
	x_k
\end{pmatrix}
\begin{pmatrix}
	\st_k\\
	x_k
\end{pmatrix}^\intercal\rp
}+ 1\\
&=
\tr \lp
\begin{pmatrix}
	H^\T H	& H^\T \\
	H 		&	1
\end{pmatrix}
\begin{pmatrix}
	\Sigma_k		&	\Psi_k\\
	\Psi^\T_k		&	\Xi_k
\end{pmatrix}\rp + 1
\end{align*}
The Schur complement in $\Xi_k$ of
$$
\begin{pmatrix}
	\Sigma_k		&	\Psi_k\\
	\Psi^\T_k		&	\Xi_k
\end{pmatrix}\succeq 0
$$
is given by
$$
\Phi_k \triangleq \Xi_k - \Psi_k^\T \Sigma_k^{-1} \Psi_k \succeq 0
$$
By taking $x_k = X_k \st_k + v_k$ with $X_k=\Psi_k^\T \Sigma_k^{-1}$ and $v_k\sim \gv(0, \Phi_k)$ independent of $\st_k$, $u_k$, and $v_l$ for $l\ne k$, we will get a sequence of pairs $(\st_k, x_k)$ that renders the covariance matrix (\ref{cov}). Also, since this strategy is linear and $v$ is a Gaussian process, $\yt$ is Gaussian such that $\yt_k$ is independent of $y^{k-1}$, and the differential entropy upper bound (\ref{gaussiany}) is achieved. The mutual information becomes
\begin{align*}
\I(W; y^n) 	&= h(y^n) - h(z^n)\\ 
			&= \frac{1}{2}\sum_{k=1}^n \log_2(2\pi e \E{\tilde{y}_k^2}) - \frac{1}{2}\sum_{k=1}^n \log_2(2\pi e)\\
			&= \frac{1}{2}\sum_{k=1}^n \log_2(2\pi e Y_k) - \frac{1}{2}\sum_{k=1}^n \log_2(2\pi e)\\
			&=  \frac{1}{2}\sum_{k=1}^n \log_2(Y_k)
\end{align*}
and
$$
C_n = \sup_{f} ~ \frac{1}{n} \I(W; y^n ) = \max_{\substack{X_1, ..., X_n\\ V_1, ..., V_n\ge 0}} \frac{1}{2n}\sum_{k=1}^n \log_2(Y_k)
$$
Now for $x_k = X_k\st_k + v_k$, we have that
$$\yt_k = y_k -H\sh_k = (X_k+H)\st_k + v_k + u_k$$ 
Let 
\begin{align*}
\Gamma_k 
		&\triangleq \E{(F\st_k + Gu_k)\yt_k}(\E{\yt_k^2})^{-1}\\ 
		&= (F\Sigma_k (X_k +H)^\T + G)Y_k^{-1}
\end{align*}
Then,
$$
\E{F\st_k + Gu_k | \yt_k} = \Gamma_k \yt_k
$$
The dynamics of $\sh_k$ and $\st_k$ are given by

\begin{eqnarray}
	\sh_{k+1} &=& \E{Fs_k + Gu_k | y^{k}}\\
				&=& \E{Fs_k + Gu_k | y_k, y^{k-1}}\\
				&=& \E{F(\sh_k+\st_k) + Gu_k | \yt_k, y^{k-1}}\\
				&=&  F\sh_k + \E{F\st_k + Gu_k | \yt_k} \label{forth}\\
				&=& F\sh_k + \Gamma_k \yt_k
\end{eqnarray}
where (\ref{forth}) follows from the orthogonality between $(u_k, \st_k, \yt_k)$ and $y^{k-1}$, which implies
independence since all variables are jointly Gaussian.
Hence, the error dynamics become
\[
\begin{aligned}
	\st_{k+1} &= F\st_{k} - \Gamma_k \yt_k + G u_k  \\
	\yt_{k}	&= (X_k+H)\st_k + v_k + u_k
\end{aligned}
\]
which implies that
\begin{equation*}
Y_k = (X_k+H)\Sigma_k (X_k+H)^\T +  V_k + 1
\end{equation*}
Finally, we have that $\Sigma_1 = \E{\st_1 \st_1^\T} = \E{s_1 s_1^\T} = 0$ and the recursion of $\Sigma_k$
is given by
\begin{equation*}
\begin{aligned}
\Sigma_{k+1} 	&= (F-\Gamma_k(X_k+H))\Sigma_k (F-\Gamma_k(X_k+H))^\T \\
			& ~~~ + (G-\Gamma_k)(G-\Gamma_k)^\T + \Gamma_k V_k \Gamma_k^\T\\ 
\end{aligned}
\end{equation*}
Now let $S_k = \E{s_k s_k^\intercal}$ and $\hat{S}_k=\E{\hat{s}_k\hat{s}_k^\intercal}$. Then, $S_k = \hat{S}_k + \Sigma_k$ (since $\hat{s}_k$ and $\tilde{s}_k$
are independent).
Also, since $S_{k+1}=FS_kF^\intercal +GG^\T$ and $\hat{S}_{k+1}= F\hat{S}_kF^\T + \Gamma_k Y_k \Gamma_k^\T$, we obtain
\begin{align*}
\Sigma_{k+1} &= S_{k+1} - \hat{S}_{k+1}\\
					&= FS_kF^\intercal +GG^\T - F\hat{S}_kF^\T - \Gamma_k Y_k \Gamma_k^\T\\
					&= F\Sigma_kF^\intercal +GG^\T- \Gamma_k Y_k \Gamma_k^\T
\end{align*} \qed

\subsection{Proof of Theorem \ref{kim}}
\label{thm3}
We know that the recursion given by (\ref{Yrec})--(\ref{riccrec}) for the optimal choice of $(X_k, V_k)$ converge since the suboptimal choice $X_k=0$ implies a converging recursion for any bounded sequence $\{V_k\}$ by the stability of the matrix $F$ (\cite{zhou:1996}).

Now at stationarity, the variance of $\yt_k$ is given by
\begin{equation}
\label{Y}
Y_k  = Y  = (X+H)\Sigma (X+H)^\T +  V + 1
\end{equation}
where for all $k$,
$$X_k=X,$$
$$V_k=V,$$
$$\Gamma_k = \Gamma,$$
$$\Sigma_k = \Sigma,$$
$\Sigma$ is the unique solution to the Riccati equation
\begin{equation}
\label{ricc}
\begin{aligned}
\Sigma 	&= (F-\Gamma(X+H))\Sigma (F-\Gamma(X+H))^\T \\
			& ~~~ + (G-\Gamma)(G-\Gamma)^\T + \Gamma V \Gamma^\T\\ 
			&= F\Sigma F^\T + GG^\T - \Gamma Y \Gamma^\T 
\end{aligned}
\end{equation}
and
$$
\Gamma = (F\Sigma (X+H)^\T + G)Y^{-1}.
$$
Recall that the power constraint at stationarity is given by
$$P\geq \E{x_k^2} = X\Sigma X^\T + V. $$
The average feedback channel capacity is then given by
$$C=\lim_{n\rightarrow \infty} C_n = \frac{1}{2}\log_2(Y)$$
and we arrive at the following optimization problem for finding the maximum capacity at stationarity:
\begin{equation}
\label{nonnegV0}
\begin{aligned}
 	\sup_{V>0} \sup_{\substack{X \\ \Sigma \succeq 0}} &~~ \frac{1}{2}\log_2(Y)\\
	\text{s.t. }   ~~ P &\ge X\Sigma X^\T + V \\
				 ~~ \Sigma &= F\Sigma F^\T + GG^\T - \Gamma Y \Gamma^\T \\
				 ~~ \Gamma &= (F\Sigma (X+H)^\T + G)Y^{-1} \\
				 ~~ Y &= (X+H)\Sigma (X+H)^\T +  V + 1
\end{aligned}
\end{equation}
Since $Y$ is a scalar, we may maximize $Y$ instead of its
logarithm and we get the equivalent optimization
problem (\ref{nonnegV}).

\subsection{Proof of Lemma \ref{epslemma}}
\label{lem1}
Consider a stationary finite order Gaussian noise process $z$ given by
\begin{equation}
\label{linsyseps}
\begin{aligned}
s_{k+1} 			&= Fs_k + Gu_k\\
z_k					&= Hs_k + u_k\\
u_k				&\sim \gv(0, 1)\\
\E{u_k u_l} 	&= 0 ~~ \forall k, l ~|~ k\neq l	\\
\end{aligned}
\end{equation}
In the proof of Theorem \ref{optimalTx} we have derived the equation
\begin{align*}
\Sigma_{k+1} &= F\Sigma_kF^\intercal +GG^\T- \Gamma_k Y_k \Gamma_k^\T
\end{align*}
which describes the estimation error dynamics of the Kalman filter. It's well known that
the error covariance is increasing, that is 
$\Sigma_{k+1}\succeq \Sigma_{k}$. 
Thus,
\begin{align*}
 \Sigma_{k} \preceq \Sigma_{k+1} =  F\Sigma_kF^\intercal +GG^\T- \Gamma_k Y_k \Gamma_k^\T
\end{align*}
and at stationarity, we have that
\[
\Sigma \preceq  F\Sigma F^\T + GG^\T - \Gamma Y \Gamma^\T
\]
Following the same lines as Theorem \ref{kim}, the average feedback capacity with the noise process given by
(\ref{linsyseps}) is $$C = \frac{1}{2}\log_2(Y)$$
where $Y$ is the solution to the optimization problem
\begin{equation*}
\begin{aligned}
 	\sup_{V>0} \sup_{\substack{X \\ \Sigma \succeq 0}} &~~ Y\\
	\text{s.t. }   ~~ P &\ge X\Sigma X^\T + V \\
				 ~~ \Sigma &\preceq  F\Sigma F^\T + GG^\T + \epsilon I - \Gamma Y \Gamma^\T \\
				 ~~ \Gamma &= (F\Sigma (X+H)^\T + G)Y^{-1} \\
				 ~~ Y &= (X+H)\Sigma (X+H)^\T +  V + 1
\end{aligned}
\end{equation*}
 and the proof is complete. \qed

\subsection{Proof of Lemma \ref{controllablepair}}
\label{lem2}
Note first that from Proposition \ref{contcond}, the pair
$$
(F-\Gamma(X+H), (G-\Gamma)(G-\Gamma)^\T + \Gamma V \Gamma^\T)
$$
is controllable if there does not exist a complex number $\lambda \in \C$ and a vector $x$ such that
$$
x^\T \lp (G-\Gamma)(G-\Gamma)^\T + \Gamma V \Gamma^\T \rp = 0
$$
and
$$
 x^\T (F-\Gamma(X+H)) = \lambda x^\T
$$
Now suppose that $x$ is such that 
$$x^\T \lp (G-\Gamma)(G-\Gamma)^\T + \Gamma V \Gamma^\T \rp= 0$$
Then,
$$
x^\T \lp (G-\Gamma)(G-\Gamma)^\T + \Gamma V \Gamma^\T \rp x= 0
$$
Since $(G-\Gamma)(G-\Gamma)^\T \succeq 0$ and $\Gamma V \Gamma^\T\succeq 0$, we must have that
$x^\T \Gamma V \Gamma^\T x= V |\Gamma^\T x|^2 = 0$ and  $x^\T (G-\Gamma)(G-\Gamma)^\T x= 0$. Since $V>0$,  
we must have $\Gamma^\T x = 0$. Thus, $x^\T \Gamma = (\Gamma^\T x)^\T = 0$ . Similarly, the equality $x^\T (G-\Gamma)(G-\Gamma)^\T x= 0$ implies that 
$x^\T(G-\Gamma) = 0$ and since $x^\T \Gamma = 0$, we get $x^\T G = 0$. Now $(F, G)$ is controllable, and Proposition \ref{contcond} implies that there does not exist a number $\lambda \in \C$ such that  $x^\T G = 0$ and $\lambda x^\T = x^\T F = x^\T (F-\Gamma(X+H)) $. Thus, we cannot have a vector $x$ and a number $\lambda \in \C$ such that
 $$
x^\T \lp (G-\Gamma)(G-\Gamma)^\T + \Gamma V \Gamma^\T \rp = 0
$$
and
$$
 x^\T (F-\Gamma(X+H)) = \lambda x^\T
$$
Hence, using  Proposition \ref{contcond} again, we conclude that
$$
(F-\Gamma(X+H), (G-\Gamma)(G-\Gamma)^\T + \Gamma V \Gamma^\T)
$$
is controllable. \qed

\subsection{Proof of Lemma \ref{stableF}}
\label{lem3}

Since the pair  $(F, G)$ is controllable and $V>0$, 
Lemma \ref{controllablepair} implies that
$$
(F-\Gamma(X+H), (G-\Gamma)(G-\Gamma)^\T + \Gamma V \Gamma^\T)
$$
is controllable. Suppose that there exists a vector $x$ such that $x^\T(F-\Gamma(X+H)) =x^\T \lambda$
for $|\lambda|\ge 1$. Then,
\begin{equation*}
\begin{aligned}
x^\T\Sigma x	&= x^\T(F-\Gamma(X+H))\Sigma (F-\Gamma(X+H))^\T x \\
				& ~~~ + x^\T(G-\Gamma)(G-\Gamma)^\T x + x^\T \Gamma V \Gamma^\T x\\ 
				&= |\lambda|^2 x^\T \Sigma x + x^\T(G-\Gamma)(G-\Gamma)^\T x + x^\T \Gamma V \Gamma^\T x\\
				&\ge x^\T \Sigma x + x^\T(G-\Gamma)(G-\Gamma)^\T x + x^\T \Gamma V \Gamma^\T x\\
\end{aligned}
\end{equation*}
which implies that $x^\T((G-\Gamma)(G-\Gamma)^\T + \Gamma V \Gamma^\T)=0$. But then, we obtain that
$$
x^\T (F-\Gamma(X+H) -\lambda I ~~~ (G-\Gamma)(G-\Gamma)^\T + \Gamma V \Gamma^\T) = 0
$$
which contradicts the fact that the pair
$$
(F-\Gamma(X+H), (G-\Gamma)(G-\Gamma)^\T + \Gamma V \Gamma^\T)
$$
is controllable. Thus, we conclude that $|\lambda|<1$ and the closed loop matrix $F-\Gamma(X+H)$ must be stable. \qed

\subsection{Proof of Lemma \ref{invertibleS}}
\label{lem4}
Since the pair  $(F, G)$ is controllable and $V>0$, 
Lemma \ref{controllablepair} implies that
$$
(F-\Gamma(X+H), (G-\Gamma)(G-\Gamma)^\T + \Gamma V \Gamma^\T)
$$
is controllable and
Lemma \ref{stableF} implies that
$F-\Gamma(X+H)$
is stable. Taking $Q=(G-\Gamma)(G-\Gamma)^\T + \Gamma V \Gamma^\T\succeq 0$, we conclude from
Proposition \ref{riccinv} that $\Sigma$ is invertible and thus, $\Sigma\succ 0$. \qed

\subsection{Proof of Theorem \ref{main}}
\label{thm4}

We can make the variable substitution $K = X\Sigma$ and maximize with respect to $V$, $K$ and $\Sigma \succ 0$. This gives the optimization problem 
\begin{equation}
\label{lmi1}
	\begin{aligned}
	\sup_{V>0} \sup_{\substack{K \\ \Sigma \succ 0}} &~~ Y\\
	\text{s.t. }	 ~~ P &\ge K\Sigma^{-1} K^\T + V \\
				 ~~ \Sigma &\preceq F\Sigma F^\T + GG^\T - \Gamma Y \Gamma^\T \\
				 ~~ \Gamma &= (FK^\T + F\Sigma H^\T + G)Y^{-1} \\
				  ~~ Y &= K\Sigma^{-1}K^\T + KH^\T + HK^\T + H\Sigma H^\T\\ 
				 ~~ &~~~ +V+1  
	\end{aligned}
\end{equation}
First we show the intuitive result that for any $K$ and $\Sigma$, taking $V$ such that equality is achieved for the power constraint, $P =  K\Sigma^{-1} K^\T + V$, is optimal. Suppose that $V$ is such that $P >  K\Sigma^{-1} K^\T + V$. Introduce $V' = P - K\Sigma^{-1} K^\T > V$. 
Then, we have that
\[
\begin{aligned}
Y 	&\le K \Sigma^{-1}K^\T + K H^\T + HK^\T + H\Sigma H^\T + V' + 1 \\
	&= K H^\T + HK^\T + H\Sigma H^\T + P + 1
\end{aligned}
\]
Furthermore, with $\Gamma' \triangleq (FK^\T + F\Sigma H^\T + G)(Y')^{-1}$, 
$
Y' = K\Sigma^{-1}K^\T + KH^\T + HK^\T + H\Sigma H^\T + V' + 1
$, 
the constraint 
\[
\begin{aligned}
	\Sigma 	&\preceq F\Sigma F^\T + GG^\T - \Gamma Y \Gamma^\T \\
				&\preceq F\Sigma F^\T + GG^\T - \Gamma' Y' (\Gamma')^\T \\
\end{aligned}
\]
is satisfied. Thus, increasing $V$ to $V'$ increases the value of $Y$ to $Y'$.  Hence, (\ref{lmi1}) is equivalent to
\begin{equation}
\label{lmi12}
	\begin{aligned}
	\sup_{V>0} \sup_{\substack{K \\ \Sigma \succ 0}} &~~ Y\\
	\text{s.t. }	 ~~ P &= K\Sigma^{-1} K^\T + V \\
				 ~~ \Sigma &\preceq F\Sigma F^\T + GG^\T - \Gamma Y \Gamma^\T \\
				 ~~ \Gamma &= (FK^\T + F\Sigma H^\T + G)Y^{-1} \\
				  ~~ Y &= KH^\T + HK^\T + H\Sigma H^\T +P+1  
	\end{aligned}
\end{equation}

Now we have that
$$
P \ge K\Sigma^{-1} K^\T + V > K\Sigma^{-1} K^\T
$$

Thus, any parameters satisfying the constraints in (\ref{lmi12}) also satisfy the constraints in 
\begin{equation}
\label{lmi2}
\begin{aligned}
\sup_{\substack{K \\ \Sigma \succ 0}} &~~ Y\\
	\text{s.t. }	& ~~ P > K\Sigma^{-1} K^\T \\
				& ~~ \Pi = FK^\T + F\Sigma H^\T + G \\
				& ~~ Y = KH^\T + HK^\T + H\Sigma H^\T +P +1 \\
				& ~~ \Sigma \preceq F\Sigma F^\T + GG^\T - \Pi Y^{-1} \Pi^\T \\
\end{aligned}
\end{equation}

Similarly, if $K, \Sigma$ satisfy the constraints in (\ref{lmi2}), then they also satisfy the constraints in (\ref{lmi12}) with 
$V = P - K\Sigma^{-1} K^\T$.

The power inequality 
$$P > K\Sigma^{-1}K^\T$$
is equivalent to the linear matrix inequality (LMI)
$$
\begin{pmatrix}
	P			&  K\\
	K^\T	& \Sigma
\end{pmatrix}\succ 0
$$
The error covariance inequality 
\[
\begin{aligned}
				& ~~ \Sigma \preceq F\Sigma F^\T + GG^\T - \Pi Y^{-1} \Pi^\T \\
\end{aligned}
\]
can be recast as the LMI
\[
\begin{aligned}
0 	&\preceq
	\begin{pmatrix}
		F\Sigma F^\T - \Sigma + GG^\T 		& \Pi\\
		\Pi^\T	& Y
	\end{pmatrix} \\
	&=
	\begin{pmatrix}
		F\Sigma F^\T - \Sigma + GG^\T 		& FK^\T + F\Sigma H^\T + G\\
		(FK^\T + F\Sigma H^\T + G)^\T	& Y
	\end{pmatrix} 
\end{aligned}
\]

Summing up, the maximum differential entropy of the channel output is given by the value of the following optimization (maximization) problem
\begin{equation}
\label{maxminopt2}
\begin{aligned}
 	\sup_{\substack{K \\ \Sigma \succ 0}}&~~   Y \\
	\textup{s.t.  } \\   0 \prec & \begin{pmatrix}
										P		&  K\\
										K^\T	& \Sigma
									\end{pmatrix}\\
					~~ Y = & KH^\T + HK^\T + H\Sigma H^\T +P +1 \\
					~~	 0 \preceq & 
					\begin{pmatrix}
						F\Sigma F^\T - \Sigma + GG^\T 		& FK^\T + F\Sigma H^\T + G\\
						(FK^\T + F\Sigma H^\T + G)^\T	& Y
					\end{pmatrix}
\end{aligned}
\end{equation}
\qed


\subsection{Matlab Code}
\label{code}
\begin{verbatim}
alpha = 0.7;
beta = -0.25;
sigma = sign(beta-alpha);
P = 1;
d = 1;
F = -beta;
G = 1;
H = alpha-beta;

a4 = alpha^2+beta^2*P;
a3 = 2*sigma*(alpha+beta*P);
a2 = P+1-alpha^2;
a1 = -2*sigma*alpha;
a0 = -1;
r = roots([a4 a3 a2 a1 a0]);

cvx_begin sdp
    variable S(d,d) symmetric
    variable K(1,d)
    variable Y
    S > 0
    [P K; K' S] > 0
    [F*S*F' - S + G*G'  F*K' + F*S*H' + G; 
        K*F' + H*S*F' + G'  Y] > 0
    Y == K*H' + H*K' + H*S*H' + P + 1
    maximize Y;
cvx_end

C_sdp = 0.5*log2(Y)
C_poly = -log2(r(4))
\end{verbatim}

\bibliography{../ref/mybib}

\end{document}